\theoremstyle{definition}
\newtheorem{prop}{Proposition}
\newtheorem{definition}{Definition}
\theoremstyle{remark}
\begin{document}
\title{High Throughput QC-LDPC Decoder With Optimized Schedule Policy in Layered Decoding} 


\author{%
  \IEEEauthorblockN{Dongxu Chang, Qingqing Peng, and Guanghui Wang}
  \IEEEauthorblockA{School of Mathematics \\
                    Shandong University\\
                    Jinan, China\\
                    Email: \{dongxuchang, pqing\}@mail.sdu.edu.cn,\\
                    ghwang@sdu.edu.cn}
                    
  \and
  \IEEEauthorblockN{Guiying Yan}
  \IEEEauthorblockA{Academy of Mathematics and Systems Science, CAS\\ 
                    University of Chinese Academy of Sciences\\
                    Beijing, China\\
                    Email: yangy@amss.ac.cn}
}

\maketitle


\begin{abstract}
   In this study, a scheduling policy of layered decoding for quasi-cycle (QC) low-density parity-check (LDPC) codes with high throughput and good performance is designed. The influence of scheduling on the delay of the decoder's hardware implementation and on the decoding performance are considered simultaneously. Specifically, we analyze the idle time required under various scheduling sequences within a pipelined decoding architecture and formulate the problem as a traveling salesman problem (TSP) aiming at minimizing idle time. Furthermore, considering that different scheduling sequences can affect decoding performance, we refine the graph used to solve the TSP based on scheduling characteristics that promote improved decoding outcomes. Simulation results demonstrate that the identified scheduling sequence achieves a low number of hardware delays while maintaining excellent decoding performance for 5G New Radio (NR) LDPC codes.
\end{abstract}

\begin{IEEEkeywords}
Low-density parity-check code, scheduling policy, decoding efficiency
\end{IEEEkeywords}

\section{Introduction}
\label{sec:introduction}
Low-density parity-check (LDPC) code \cite{gallager1962low} has been widely used in communication fields, such as 5G New Radio (NR) \cite{3gpp20185g}, IEEE 802.11 (WiFi) \cite{ieee2007ieee}. It was demonstrated by D. MacKay and M. Neal in 1996 \cite{mackay1997near} that LDPC codes can approach the Shannon limit. The belief propagation (BP) decoding algorithm \cite{mackay1999good} of LDPC code is most commonly used due to its low complexity and good performance in practice.

As a structured LDPC code, Quasi-cyclic (QC) LDPC codes feature a parity-check matrix constructed from circularly shifted identity submatrices \cite{fossorier2004quasicyclic}. Owing to their exceptional hardware compatibility and low encoding/decoding complexity, QC-LDPC codes have been adopted in the 5G New Radio (NR) standard for data channels \cite{3gpp20185g}.

In 2004, D. Hocevar introduced the concept of layered belief propagation (LBP) decoding\cite{hocevar2004reduced}, which updates information sequentially. This decoding strategy can accelerate the decoding process as it ensures that the most recent information is disseminated. In comparison to the flooding scheduling strategy \cite{kschischang1998iterative}, which updates all variable-to-check (V2C) messages or check-to-variable (C2V) messages simultaneously in one iteration, LBP can reduce decoding complexity by nearly 50$\%$ while achieving the same performance.

Through the pipeline decoding architecture \cite{bhatt2006pipelined} of LBP, QC LDPC can achieve high clock frequency and high data-throughput in hardware implementation. However, memory conflicts can arise when the information to be read from the current row has not yet been written in the previous row. In such cases, idle cycles are required to delay the reading of information, which causes memory conflicts. This increases the decoding latency and reduces the throughput of the decoder. It is noteworthy that the number of idle cycles required can vary depending on the update order of check nodes and variable nodes.

In \cite{sulek2010idle}, a random search algorithm was developed to reorder the rows and columns of the parity-check matrix, enabling the identification of decoding orders with fewer idle cycles. In \cite{petrovic2020flexible}, by employing a flooding scheduling strategy on certain layers, memory conflicts can be mitigated. In \cite{marchand2009conflict} and \cite{wu2015updating}, the required number of idle cycles was analyzed and characterized by the common degree of adjacent check nodes in the decoding order. By formulating the problem of finding the scheduling sequence with the minimum idle cycles as a Traveling Salesman Problem (TSP) \cite{davendra2010traveling}, scheduling sequences with low decoding latency can be obtained. Nevertheless, there is still room to further optimize the decoding order to reduce the number of required idle cycles. More importantly, existing studies that investigate the impact of decoding scheduling on hardware memory conflicts have not taken into account its effects on decoding performance. As demonstrated in \cite{chang2024optimization,tian2022novel,wang2020two}, different scheduling sequences can significantly impact both decoding efficiency and decoding performance.

In this study, similar to \cite{marchand2009conflict} and \cite{wu2015updating}, we transform the problem of finding the scheduling sequence that minimized the number of idle cycles into an equivalent TSP for a solution. Furthermore, based on the characteristics that an effective scheduling sequence should possess \cite{tian2022novel}, we modify the graph used for TSP accordingly, so that the scheduling sequence obtained from the TSP can achieve good decoding performance. We demonstrate that when the latency of the soft-output (SO) data path is small, the constraints imposed on the scheduling sequence from the perspective of decoding performance do not, on average, hinder our ability to find the scheduling with low idle time. Simulation results confirm that the scheduling sequences obtained by the proposed algorithm effectively resolve memory conflicts while maintaining excellent decoding performance.

The rest of the paper is organized as follows: Section \uppercase\expandafter{\romannumeral2} provides an introduction to the scheduling problem under LBP, as well as the calculation of the number of idle cycles required in the pipelined decoding architecture. Section \uppercase\expandafter{\romannumeral3} presents the policy for finding scheduling sequences that simultaneously achieve favorable idle time and decoding performance. Section \uppercase\expandafter{\romannumeral4} provides a comparative simulation analysis of various scheduling policies. We conclude the paper in Section \uppercase\expandafter{\romannumeral5}.

\section{Preliminaries}
\subsection{Belief Propagation for LDPC Codes}
A brief review of the iterative BP decoding algorithm \cite{mackay1999good} is provided to facilitate the analysis of scheduling sequences.

The message-passing process of BP decoding consists of V2C message updates and C2V message updates. The message update rule of V2C is 
\begin{equation}
	m_{i \rightarrow \alpha}^{(l)}=m_{0}+\sum_{h \in N(i) \backslash \alpha} m_{h \rightarrow i}^{(l-1)},
	\label{v2c}
\end{equation}
and the message update rule of C2V is 
\begin{equation}
	m_{\alpha \rightarrow i}^{(l)}=2 \tanh ^{-1}\left(\prod_{j \in N(\alpha) \backslash i} \tanh \left(m_{j \rightarrow \alpha}^{(l-1)} / 2\right)\right)
	\label{c2v}
\end{equation}
where $m_0$ is the channel message in LLR form, $l$ is the number of iterations, $N(v)$ represents the nodes connected directly to node $v$, $m_{i \rightarrow \alpha}^{(l)}$ means the message from variable node $i$ to check node $\alpha$ in iteration $l$ and $m_{\alpha \rightarrow i}^{(l)}$ means the message from check node $\alpha$ to variable node $i$ in iteration $l$. The initial message $m_{i \rightarrow \alpha}^{(0)}$ and $m_{\alpha \rightarrow i}^{(0)}$ is 0.

\subsection{Layered Belief Propagation and Scheduling Sequences}
In each decoding iteration, LBP \cite{hocevar2004reduced} sequentially selects each check node or layer in the graph and updates the information on all edges connected to the selected check node or layer, where one layer refers to all the rows obtained from a single row of the base graph after lifting. Specifically, it performs V2C message updates along all edges connected to the selected check node, followed by C2V message updates. The order in which check nodes are selected for updating in LBP is referred to as the scheduling sequence, defined as follows:

\begin{definition}[scheduling sequences]
	The scheduling sequence of a QC-LDPC code is a sequence composed of the indices of layers. It represents the order in which the layers are selected for decoding in LBP, with the decoder selecting layers for decoding according to the order in which they appear in the scheduling sequences.
\end{definition}

When the same message passing order is used in each iteration, the scheduling sequence can consist of only the layer update order within one iteration.

Different scheduling sequences can have a great impact on the message passing efficiency in layered decoding, thereby affecting the decoding performance and latency \cite{chang2024optimization}.

\subsection{Idle Cycles in Pipelining}
\begin{figure}
	\centering
	\includegraphics[width=9cm]{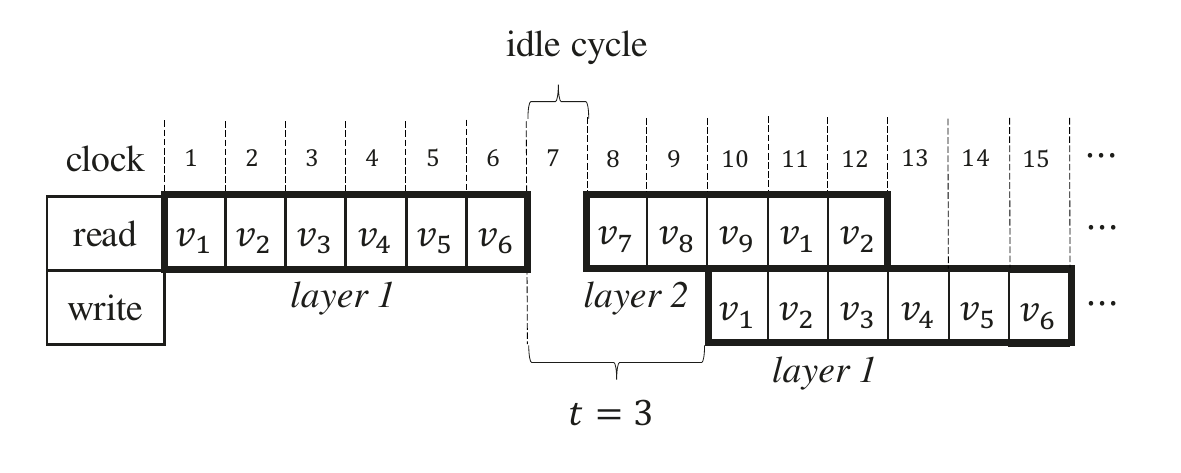}
	\caption{Idle cycle in the pipelined decoding architecture.}
	\label{fig_idletime}
\end{figure}
The pipelined decoding architecture achieves high throughput by effectively utilizing hardware resources \cite{bhatt2006pipelined, sun2006high}. As illustrated in Fig. \ref{fig_idletime}, the information associated with the variable nodes is read and written in a pipelined manner, where $t$ represents the latency of the SO data path, measured in clock cycles. During this process, the decoder simultaneously reads the information of the variable nodes in the current decoding layer while writing the updated information of the variable nodes from the previously decoded layer.

However, when there are correlations between the input and output information, memory conflicts can occur, necessitating the insertion of idle cycles to prevent such conflicts.
For example, in Fig. \ref{fig_idletime}, one idle cycle needs to be inserted to avoid memory conflicts. It is worth noting that in Fig. \ref{fig_idletime}, the read and write-back order of the information has been optimized to ensure the minimum number of required idle cycles \cite{rovini2007minimum,lin202133,nadal2021parallel}. Following \cite{rovini2007minimum} and \cite{ren2024generalized}, for a given scheduling sequence $\left\{c_1, c_2, \cdots, c_m\right\}$, the number of idle cycles $n_{idle}$ required in one iteration can be calculated using the following equation:
\begin{equation}
	\nonumber
	n_{idle} = max\Big(t-(d_{c_1}-d^{c}_{c_{m},c_1}),0\Big)+\sum_{i=2}^{m}max\Big(t-(d_{c_i}-d^{c}_{c_{i-1},c_i}),0\Big),
\end{equation}
where $d_{c_i}$ is the degree of layers $c_i$ and $d^{c}_{c_{i-1},c_i}$ is the common degree of layers $c_{i-1}$ and $c_i$.

\section{Scheduling with Small Idle Time And Good Performance}
To achieve high throughput and performance in scheduling, in this section, we transform the problem of finding the scheduling sequence with minimal idle time into an asymmetric TSP. By adjusting the graph used in the TSP, based on the characteristics of good scheduling sequences \cite{tian2022novel}, this approach simultaneously accounts for the impact of scheduling on both hardware design and decoding performance.

\subsection{The TSP for Solving Memory Conflicts}
The task of identifying the scheduling sequence that minimizes the number of idle cycles can be equivalently reformulated as an asymmetric TSP, defined as follows: for a given complete weighted direct graph $\mathcal{G}$, the objective is to find a Hamiltonian cycle with the minimum total weight \cite{davendra2010traveling}. In this context, the nodes in $\mathcal{G}$ represent each layer in the given QC-LDPC code. The weight $w_{i-1, i}$ of the edge from node $c_{i-1}$ to node $c_{i}$ in $\mathcal{G}$ represents the number of idle cycles required for decoding from layer $c_{i-1}$ to layer $c_{i}$, which is $w_{i-1, i} = max(t-(d_{c_i}-d^{c}_{c_{i-1},c_i}),0)$. It should be noted that the idle time from layer $c_{i-1}$ to $c_i$ may differ from the idle time from layer $c_i$ to $c_{i-1}$, meaning that the weights of the edges between two nodes in the $\mathcal{G}$ may not be the same in different direction. Figure \ref{fig_TSP} illustrates the graph $\mathcal{G}$ constructed for the asymmetric TSP with four layers. The resulting asymmetric TSP can be directly solved using the differential evolution algorithm, or it can be transformed into a symmetric TSP for subsequent resolution \cite{jonker1983transforming}.

\begin{figure}
	\centering
	\includegraphics[width=4cm]{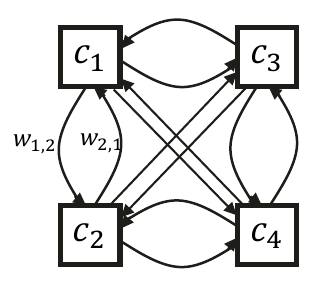}
	\caption{The graph $\mathcal{G}$ for TSP with 4 layers.}
	\label{fig_TSP}
\end{figure}

\subsection{Optimization of Idle Time And Decoding Performance}
The choice of scheduling sequences influences not only the number of required idle cycles but also the decoding performance. As demonstrated in \cite{tian2022novel}, scheduling sequences that achieve good decoding performance should prioritize updating check nodes with smaller degrees and fewer connections to punctured variable nodes.

Therefore, scheduling sequences with good decoding performance are expected to exhibit the following characteristics: 1. The layers in the scheduling sequence should be ordered by their degrees, from smallest to largest. 2. Layers with the same degree should be ordered by the number of connections to punctured variable nodes, from smallest to largest. Based on the above requirements, we group the layers in the graph such that the layers within each group have the same degree and the same number of connections to punctured variable nodes. The groups are further labeled based on $(d_i, p_i)$, in ascending order, where $d_i$ is the degree of layer $i$ and $p_i$ is the number of connections to punctured variable nodes of layer $i$. Specifically, group $i$ is labeled before group $j$ (i.e., $i < j$) if $d_i < d_j$, or $d_i = d_j$ and $p_i < p_j$. The labels of the layer groups are denoted sequentially from 1 to $P$.

\begin{figure}
	\centering
	\includegraphics[width=5.5cm]{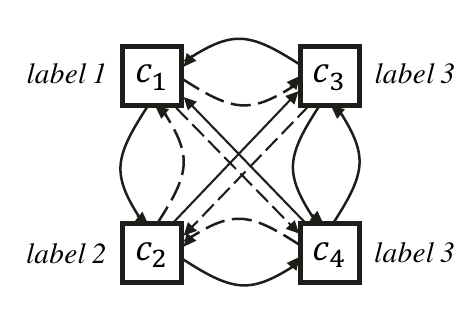}
	\caption{The modified graph $\mathcal{G}$ for TSP with 4 layers. The dashed edges represent weights of positive infinity.}
	\label{fig_newTSP}
\end{figure}

Consider an arbitrary edge in the directed graph $\mathcal{G}$. Let this edge point from a layer labeled $i$ to a layer labeled $j$. If \(i\) and \(j\) do not satisfy any of the following three conditions, set the weight of the corresponding edge to positive infinity: 1. $i+1=j$. 2. $i = j$. 3. $i = P, j = 1$. The third scenario above corresponds to the last updated check node in the current iteration transitioning to the first updated check node in the subsequent iteration. With these modifications, the resulting scheduling sequence can satisfy the characteristics of efficient scheduling. Additionally, for the edges directed from the layers labeled $P$ to the layers labeled $1$, we add an extra weight of $H$ to these edges, where $H$ is a large constant. This ensures that the resulting scheduling sequence strictly follows the order of updating the layers with smaller labels before those with larger labels within a single iteration. Figure \ref{fig_newTSP} illustrates the modified version of the graph in Figure \ref{fig_TSP}, where dashed edges represent edges with infinite weights.

By solving the asymmetric TSP problem on the modified graph $\mathcal{G}$, an optimal scheduling sequence with minimal idle time can be obtained within those that exhibit good decoding performance. Specifically, for the scheduling sequence $\left\{c_1, c_2, \cdots, c_m\right\}$, if $c_1$ and $c_m$ are carefully chosen such that the common degree from $c_m$ to $c_1$ is small, then $n_{idle}$ can be approximated by $n_{idle}^{\prime}$, where
\begin{equation}
	n_{idle}^{\prime} = \sum_{i=2}^{m}max\Big(t-(d_{c_i}-d^{c}_{c_{i-1},c_i}),0\Big).
\end{equation}
We show that if the latency of the SO data path $t$ is relatively small, then probabilistically, prioritizing the update of low-degree check nodes before high-degree check nodes can achieve the minimum $n_{idle}^{\prime}$.

\begin{prop}
	Let the variable nodes be uniformly randomly connected to the check nodes, and denote the minimum degree of the layers by $d_{min}$. Assume that $t\leq d_{min}$ and $N$ approaches infinity. Then prioritizing the update of low-degree check nodes before high-degree check nodes can yield the minimum expectation of $n_{idle}^{\prime}$.
	\label{prop1}
\end{prop}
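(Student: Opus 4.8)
The plan is to compute the expectation $\mathbb{E}[n_{idle}^{\prime}]$ as a sum of terms $\mathbb{E}\big[\max(t-(d_{c_i}-d^{c}_{c_{i-1},c_i}),0)\big]$ over consecutive pairs in the schedule, and to show that each such term is minimized, in the relevant regime, by arranging that $d_{c_{i-1}}\le d_{c_i}$. First I would fix a pair of layers with degrees $d$ and $d'$ and, using the uniform-random connection model, identify the distribution of the common degree $d^{c}$ between them. Since the $N$ variable nodes are attached uniformly at random, the set of variable nodes touched by one layer and the set touched by the other behave like two independent uniformly random subsets of the coordinate set; as $N\to\infty$ the expected overlap $\mathbb{E}[d^{c}]$ scales like $d\,d'/N\to 0$, and more strongly $d^{c}=0$ with probability tending to $1$. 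Hence as $N\to\infty$ the term $\mathbb{E}\big[\max(t-(d_{c_i}-d^{c}_{c_{i-1},c_i}),0)\big]$ converges to $\max(t-d_{c_i},0)$, which depends only on the degree of the \emph{destination} layer $c_i$, not on $c_{i-1}$.

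Next I would use the hypothesis $t\le d_{min}$: since every layer degree satisfies $d_{c_i}\ge d_{min}\ge t$, we get $\max(t-d_{c_i},0)=0$ for every $i$ in the limiting expression. Thus in the limit $\mathbb{E}[n_{idle}^{\prime}]\to 0$ for \emph{any} ordering, and in particular for the degree-sorted (low-to-high) ordering. To make the ``prioritizing low degree'' statement meaningful at the level of the $N\to\infty$ asymptotics rather than only the strict limit, I would keep the leading correction term: expanding, $\mathbb{E}\big[\max(t-(d_{c_i}-d^{c}),0)\big]$ is governed by the probability that $d^{c}\ge d_{c_i}-t+1$, and since larger $d_{c_{i-1}}$ stochastically increases $d^{c}$, placing the larger-degree layer \emph{second} in each adjacent pair never decreases this probability. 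A global exchange argument then shows that among all orderings the degree-nondecreasing one simultaneously minimizes every adjacent contribution, hence minimizes the sum; this is exactly the statement that updating low-degree layers before high-degree layers yields the minimum expected $n_{idle}^{\prime}$.

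The main obstacle I anticipate is making the overlap/common-degree distribution precise and uniform enough to justify interchanging the limit $N\to\infty$ with the expectation and the (finite) sum over pairs, and to control the dependence structure: the layers of a QC-LDPC base graph are not literally independent random subsets, so I would either invoke the stated ``uniformly randomly connected'' idealization to treat them as such, or bound the correlations and show they vanish as $N\to\infty$. A secondary subtlety is the exchange argument itself — one must check that swapping two adjacent layers in the schedule only affects the two (or three) incident terms and that the inequality goes the right way for each, which follows from the stochastic monotonicity of $d^{c}$ in the degree of the predecessor layer together with the monotonicity of $x\mapsto\max(t-x,0)$. Once these two points are in place, the proposition follows by summing the per-pair inequalities.
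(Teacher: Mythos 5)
Your first step matches the paper's: under the uniform random connection model the common degree $d^{c}_{c_{i-1},c_i}$ is hypergeometric, each term $E\bigl[\max\bigl(t-(d_{c_i}-d^{c}_{c_{i-1},c_i}),0\bigr)\bigr]$ tends to $0$ because $t\le d_{min}\le d_{c_i}$, and the comparison must therefore be made at the level of the leading correction, which scales as $N^{-(d_{c_i}+1-t)}$ with a coefficient that grows with the predecessor degree $d_{c_{i-1}}$. This is exactly the computation in the appendix. Where you diverge is the concluding combinatorial step, and that is where your argument has a genuine gap. The claim that ``among all orderings the degree-nondecreasing one simultaneously minimizes every adjacent contribution'' cannot be true: in a schedule each layer has exactly one predecessor, so the incoming terms cannot all be minimized at once (in the ascending order the largest-degree layer receives the second-largest-degree layer as predecessor, which does not minimize that particular term). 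The adjacent-swap exchange argument also fails as stated: swapping an adjacent inversion toward sorted order can strictly increase the dominant term. For example, with degrees $10,5,3$ appearing as $\ldots,10,5,3,\ldots$, swapping to $\ldots,10,3,5,\ldots$ moves the costly ``into the degree-$3$ layer'' term from predecessor degree $5$ to predecessor degree $10$, which is worse at the dominant order $N^{-(3+1-t)}$, even though the pair $(5,3)$ was inverted. So stochastic monotonicity of $d^{c}$ in the predecessor degree plus monotonicity of $x\mapsto\max(t-x,0)$ is not enough to push a local-swap argument through; the effect on the neighboring term matters and can dominate.

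The paper instead closes the argument with an asymptotic-hierarchy (lexicographic) reasoning: since the expected cost of a transition into a layer of degree $d$ is $\Theta\bigl(N^{-(d+1-t)}\bigr)$, the total expectation is dominated by transitions into the smallest-degree layers, and for those the coefficient is minimized by giving them the smallest-degree predecessors available; ascending order achieves this, and the boundary term is handled by the standing assumption that $c_1$ and $c_m$ are chosen with small common degree (equivalently, that $n_{idle}'$ drops the wrap-around term). Your write-up never uses this separation of scales in $N$, nor the special role of the first layer, and both are needed to get the ordering conclusion. Two smaller points: your sentence ``placing the larger-degree layer second \ldots never decreases this probability'' has the inequality backwards (placing the larger-degree layer second raises the threshold $d_{c_i}-t+1$ and so never \emph{increases} $P(d^{c}\ge d_{c_i}-t+1)$); and your concern about independence of layers is moot here, since the proposition's hypothesis is precisely the idealized uniform-random model the paper computes with.
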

\begin{proof}
	Appendix.
\end{proof}

\section{Simulation Results}
In this section, we compare the idle time and decoding performance of the scheduling sequences obtained under different scheduling policies. Specifically, we examine the scheduling policy derived from solving the TSP with the sole objective of minimizing idle cycles (denoted as "idle"), as well as the scheduling policy that simultaneously minimizes idle cycles and optimizes decoding performance (denoted as "idle\&performance"). Additionally, we include comparisons with the scheduling policy that orders layers based on their degree in ascending order (denoted as "LD") \cite{frenzel2019static} and the nested scheduling policy derived from threshold calculations (denoted as "nest") \cite{jang2022design}.

\begin{figure*}
	\centering
	\subfigure[]
	{
		\begin{minipage}[b]{0.45\linewidth}
			\includegraphics[width=7cm]{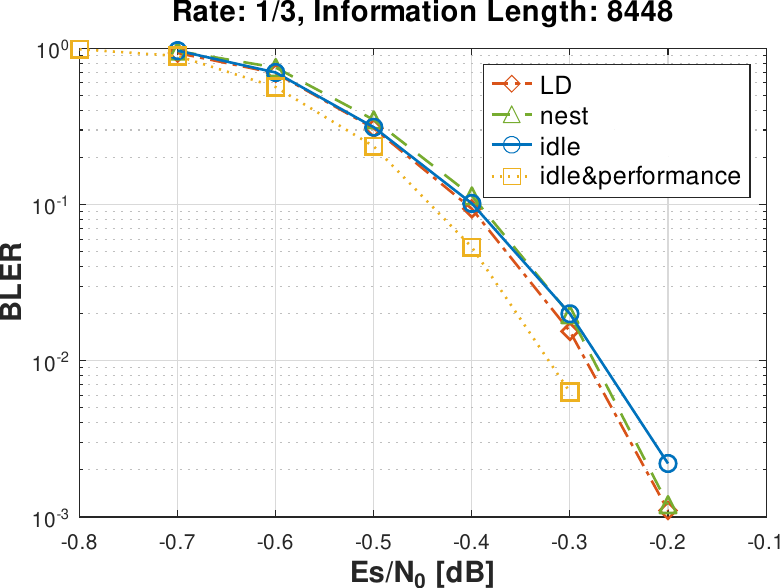}
		\end{minipage}
	}
	\subfigure[]
	{
		\begin{minipage}[b]{0.45\linewidth}
			\includegraphics[width=7cm]{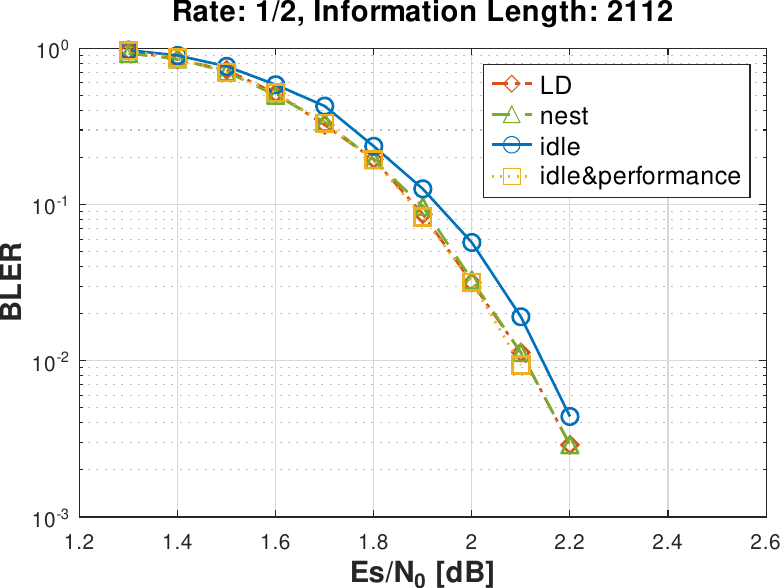}
		\end{minipage}
	}
	\caption{Error-correcting performance for various scheduling policies under different code rates and information lengths. The 5G NR LDPC codes constructed from BG1 are evaluated. The latency of the SO data path $t$ is 4.}
	\label{fig_simulation_t4}
\end{figure*}

\begin{figure*}
	\centering
	\subfigure[]
	{
		\begin{minipage}[b]{0.45\linewidth}
			\includegraphics[width=7cm]{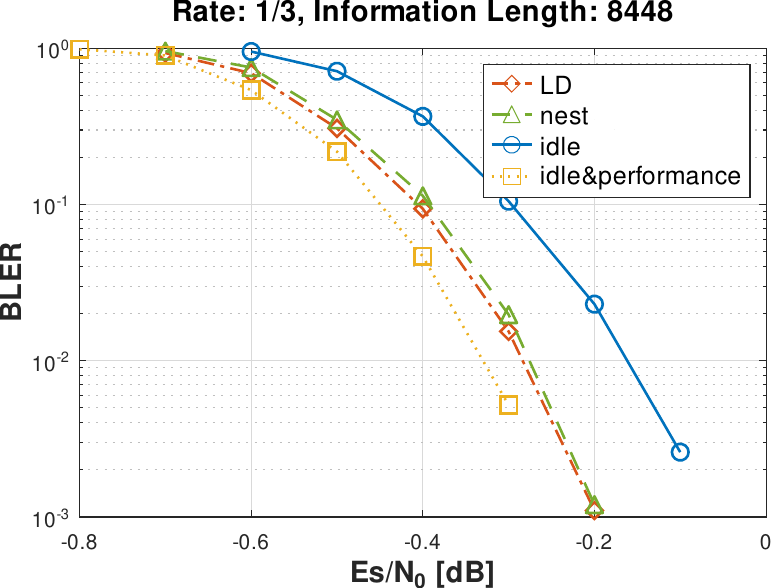}
		\end{minipage}
	}
	\subfigure[]
	{
		\begin{minipage}[b]{0.45\linewidth}
			\includegraphics[width=7cm]{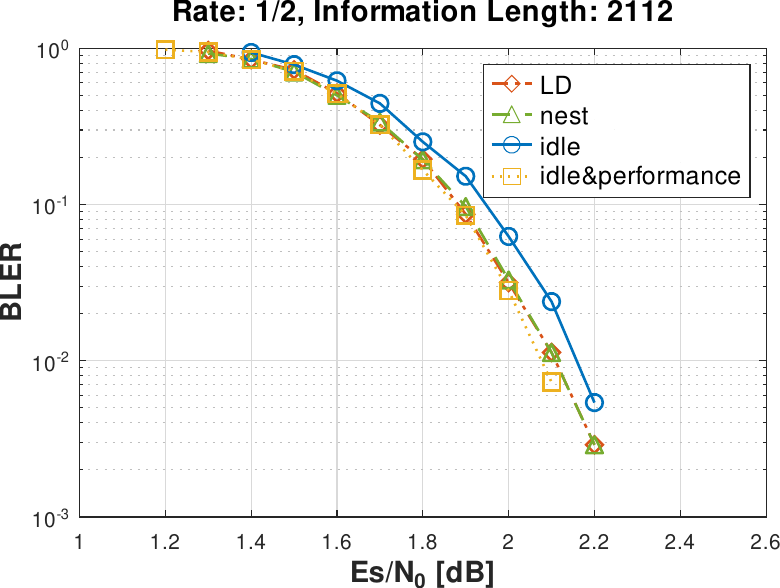}
		\end{minipage}
	}
	\caption{Error-correcting performance for various scheduling policies under different code rates and information lengths. The 5G NR LDPC codes constructed from BG1 are evaluated. The latency of the SO data path $t$ is 9.}
	\label{fig_simulation_t9}
\end{figure*}

\begin{table*}[]
	\centering
	\caption{$n_{idle}$ for different scheduling policies under 5G NR LDPC BG1.}
	\label{table_idletime}
	\setlength{\tabcolsep}{3pt}
	\begin{threeparttable}
		\begin{tabular}{|l|ll|ll|}
			\hline
			& \multicolumn{2}{c|}{$t=4$} & \multicolumn{2}{c|}{$t=9$} \\ \hline
			Scheduling policies   & \multicolumn{1}{l|}{$R=1/3, K=8448$}           & \multicolumn{1}{l|}{$R=1/2, K=2112$}  & \multicolumn{1}{l|}{$R=1/3, K=8448$}           & \multicolumn{1}{l|}{$R=1/2, K=2112$} \\ \hline
			nest \cite{jang2022design}                 & \multicolumn{1}{c|}{19}               & \multicolumn{1}{c|}{10} & \multicolumn{1}{c|}{214}               & \multicolumn{1}{c|}{99} \\ \hline
			LD \cite{frenzel2019static}                 & \multicolumn{1}{c|}{15}               & \multicolumn{1}{c|}{9} & \multicolumn{1}{c|}{195}               & \multicolumn{1}{c|}{89} \\ \hline
			idle                  & \multicolumn{1}{c|}{2}               & \multicolumn{1}{c|}{2} & \multicolumn{1}{c|}{158}               & \multicolumn{1}{c|}{66} \\ \hline
			idle\&performance                  & \multicolumn{1}{c|}{6}               & \multicolumn{1}{c|}{4} & \multicolumn{1}{c|}{176}               & \multicolumn{1}{c|}{79} \\ \hline
			\end{tabular}
		\end{threeparttable}
	\end{table*}
	
In Fig. \ref{fig_simulation_t4} and \ref{fig_simulation_t9}, the block error rate (BLER) performance for various scheduling policies is presented for \( t = 4 \) and \( t = 9 \), respectively, across different code rates and information lengths. The 5G NR LDPC codes constructed from base graph (BG) 1 are evaluated. It can be observed that, for different values of $t$, code rates, and information lengths, the scheduling sequences obtained solely by minimizing idle time exhibit large performance degradation compared to other scheduling policies. In contrast, scheduling methods that take both decoding performance and memory conflicts into account can mitigate the performance loss caused by focusing solely on memory conflict reduction. By characterizing the features of effective scheduling, optimal decoding performance can be achieved. 

In TABLE $\rm{\uppercase\expandafter{\romannumeral1}}$, the values of \(n_{\text{idle}}\) for different scheduling policies under 5G NR LDPC BG1 are presented. It can be seen that the number of idle cycles obtained when simultaneously considering idle time and decoding performance is slightly higher than that in the case where only idle time is considered. However, compared to other scheduling policies that solely focus on decoding performance, the number of idle cycles used under the proposed scheduling policy can be largely reduced.

\section{Conclusion}
In this study, we considered the impact of scheduling sequences on the decoding performance and hardware idle time of LDPC codes. By analyzing the number of idle cycles required for different scheduling sequences and the characteristics of effective scheduling sequences, we transformed the problem of minimizing idle cycles while maintaining good decoding performance into a TSP for resolution. Simulation shows that the scheduling sequences we obtained exhibit both good decoding performance and low idle time. Further research is required to better consider the characteristics of effective scheduling in order to enhance performance and reduce idle time.

\section*{Acknowledgment}
This work was partially supported by the National Key R\&D Program of China (2023YFA1009600).

\appendix
\section{Proof of Proposition \ref{prop1}}
\label{sec_appendix}
For two randomly chosen layers $c_i$ and $c_j$, let their degrees be denoted as $d_i$ and $d_j$, respectively. Since the variable nodes are uniformly randomly connected to the check nodes, the expected value $E_{i,j}$ of $max(t-(d_{c_j}-d^{c}_{c_{i},c_j}),0)$ is 
\begin{equation}
	\nonumber
	\begin{split}
		E_{i,j} &= \sum_{k=d_j+1-t}^{min(d_i,d_j)} P(d^{c}_{c_{i},c_j}=k)(k+t-d_j)\\
		&= \sum_{k=d_j+1-t}^{min(d_i,d_j)} \frac{C_{d_i}^{k}\cdot C_{N-d_i}^{d_j-k}}{C_{n}^{d_j}} (k+t-d_j) \\
		&= \sum_{k=d_j+1-t}^{min(d_i,d_j)} \frac{1}{N^{k}} + o(\frac{1}{N^{k}})\\
		& = \frac{1}{N^{d_j+1-t}} + o(\frac{1}{N^{d_j+1-t}}).
	\end{split}
\end{equation}

Therefore, the number of idle cycles required by layers with smaller degrees is expected to be significantly greater than that required by layers with larger degrees. As a result, for a layer with a small degree, the degree of the preceding layer during its update should be as small as possible to minimize the number of idle cycles required by the small degree layer. Since we assume $c_1$ and $c_m$ are carefully chosen such that the common degree from $c_m$ to $c_1$ is small, updating the layers in ascending order of their degrees can minimize the expectation of $n_{idle}^{\prime}$.

\bibliographystyle{ieeetr}
\bibliography{reference}

\end{document}